\documentclass{amsart}
\usepackage{amssymb}
\usepackage{amsmath}
\usepackage{amsfonts}

\newtheorem{theorem}{Theorem}
\theoremstyle{plain}

\newtheorem{proposition}{Proposition}
\newtheorem{remark}{Remark}

\numberwithin{equation}{section}
\newcommand{\TCItag}[1]{\nonumber\quad\text{(#1)}}
\newcommand{\func}[1]{\operatorname{#1}}
\newcommand{\tciFourier}{\mathcal{F}}

\begin{document}
\title[Inverse scattering method via Gelfand-Levitan-Marchenko equation]{%
Inverse scattering method for nonlinear negative first order coupled
Klein--Gordon equation }
\author{Cihan Sabaz$^{1}$, Ilmar Gahramanov$^{2,\text{ }3}$ and Mansur I.
Ismailov$^{1,\text{ }3}$, }
\email{cihansabaz@gtu.edu.tr, ilmar.gahramanov@bogazici.edu.tr,
mismailov@gtu.edu.tr }
\date{}
\subjclass[2000]{Primary 37K15; Secondary 35C08, 35L70}
\keywords{Coupled negative first order Klain-Gordon equation; Manakov
spectral problem; Conservation Laws, Inverse scattering method,
Gelfand-Levitan-Marchenko equation; Soliton Solution}
\dedicatory{$^{1}$\textit{\ Department of Mathematics, Gebze Technical
University, \ 41400 Gebze-Kocaeli, T\"{u}rkiye}\\
$^{2}$ \textit{Department of Physics, Bogazici University, 34342 Bebek,
Istanbul, T\"{u}rkiye}\\
$^{3}$ \textit{Center for Mathematics and its Applications, Khazar
University, 1096 Baku, Azerbaijan}}

\begin{abstract}
The exact N-soliton solutions are derived for the coupled negative first
order Klein-Gordon (CNKG) equation subject to vanishing boundary conditions
by using the inverse scattering method via Gelfand-Levitan-Marchenko
equation. Based on the zero-curvature representation, the conservation laws,
integrals of motion, and Hamiltonian structure of the aforementioned coupled
nonlinear equations are constructed. The Jost functions and their
analyticity properties for the Manakov spectral problem are recalled. The
integral equations for the eigenfunctions are then used to formulate the
Gel'fand--Levitan--Marchenko equations. Solving these equations establishes
a direct correspondence between the kernel functions and the potential,
yielding the general N-soliton expressions.
\end{abstract}

\maketitle

\section{ Introduction}

The work of Tsuchida and Wadati [1] considers multi-component versions of
the modified Korteweg--de Vries (mKdV) equation and the nonlinear Schr\"{o}%
dinger equation, generalizing the coupled mKdV system previously introduced
by Iwao and Hirota [2], who derived its multi-soliton solutions. These
coupled equations have been solved by the inverse scattering method (ISM)
through the Gelfand--Levitan--Marchenko (GLM) equation. However, it remains
unknown whether the coupled negative-order nonlinear partial differential
equations and the corresponding hierarchy can also be solved by the ISM.

Consider the following coupled negative first order Klein-Gordon (CNKG):

\begin{eqnarray}
q_{2,xt} &=&q_{2}+2q_{2}\partial ^{-1}\left( \left\vert q_{2}^{2}\right\vert
\right) _{t}+q_{2}\partial ^{-1}\left( \left\vert q_{3}^{2}\right\vert
\right) _{t}+q_{3}\partial ^{-1}\left( q_{2}q_{3}^{\ast }\right) _{t}, 
\notag \\
&&  \TCItag{1.1} \\
q_{3,xt} &=&q_{3}+2q_{3}\partial ^{-1}\left( \left\vert q_{3}^{2}\right\vert
\right) _{t}+q_{3}\partial ^{-1}\left( \left\vert q_{2}^{2}\right\vert
\right) _{t}+q_{2}\partial ^{-1}\left( q_{2}^{\ast }q_{3}\right) _{t}, 
\notag
\end{eqnarray}%
where $\partial _{x}^{-1}=\int\limits_{x}^{\infty }dx$\ is indefinite
integral with respect to $x$.

In the case $q_{2}=q_{3}=\pm \frac{1}{\sqrt{2}}q$ the system (1.1) becomes
to the equation

\begin{equation*}
q_{xt}=q+4q\partial ^{-1}(\left\vert q^{2}\right\vert )_{t}
\end{equation*}%
makes it scalar negative order nonlinear Klein-Gordon Equation (see [3,4]),
where is the nonlinear Klein--Gordon equation coupled to a scalar field in
[5]

\begin{eqnarray*}
r_{\varkappa \varkappa }-r_{\tau \tau }-r+2r^{3}+pr &=&0, \\
p_{\varkappa }-p_{\tau }-4rr_{\tau } &=&0
\end{eqnarray*}%
by the change of variables $\varkappa =\frac{x+4t}{2}$, $\tau =\frac{4t+x}{2}
$, elimination of $p$ in second equation under the assumption that the
scalar field $p$ tends to zero at infinity and by the substitution $r=\frac{1%
}{2}q$.

In the present paper, the inverse scattering method (ISM) is employed to
investigate soliton solutions of the CNKG equation. Through spectral
analysis, the time evolution of the Gelfand--Levitan--Marchenko (GLM)
equations is derived, and a correspondence between the solutions of the CNKG
equation and those of the GLM equations is established. By solving the GLM
equations, explicit expressions for the soliton solutions of the CNKG
equation are obtained. The general framework of the inverse scattering
method based on the GLM equation is presented in [6, 7].

Recent works [8, 9] on the inverse scattering method are particularly
important, as they highlight several key challenges, especially concerning
the construction and application of the inverse scattering transform based
on the GLM equation for obtaining soliton solutions, as demonstrated in [10]
in the context of the theory of self-induced transparency. Several coupled
nonlinear equations, including the mKdV equation, are reconsidered in [11]
within the framework of the inverse scattering method via the
Riemann--Hilbert problem.

The paper is organized as follows. In Section 2, the zero-curvature
representation of the CNKG equation are derived using their associated Lax
pair by the Manakov spectral problem. This section also introduces the
conservation laws, integral of motion{} and Hamiltonian structure of CNKG
equation. In Section 3, the spectral problem corresponding to the Manakov
system, presents a rigorous analysis of the Jost functions, examines the
asymptotic behavior of the scattering data, and introduces the adjoint
spectral problem. Furthermore, the fundamental properties of the scattering
matrix are discussed, and the GLM equations are derived rigorously. The
solution of the CNGK equation are then expressed in terms of the scattering
data, where the zeros of the diagonal entries of the scattering matrix
correspond to soliton solutions. In Section 4, the time evolution of the
scattering data is studied. The N-soliton solutions of the CNKG equation are
constructed explicitly via the GLM equation. Finally, the one and
two-solitons of the CNGKequation are illustrated by appropriate parameter
choice in the structural features of the N-soliton solutions.

\section{ Lax Pair}

In this section, we represent the CNKG equation as the compatibility
condition of a pair of linear equations. These may consist of an eigenvalue
problem for the Lax operator $L$ and a linear evolution equation generated
by the operator A for the corresponding eigenfunction. Alternatively, they
may take the form of a first-order linear system expressing the covariant
constancy of a vector with respect to a 1+1-dimensional gauge potential $%
(U,V)$. The compatibility condition is given either by the Lax equation $%
\frac{d}{dt}L=[L,A]$ or equivalently by the zero-curvature condition $%
U_{t}-V_{x}+[U,V]=0.$ The conserved quantities then follow from the
isospectrality of the Lax operator and the associated monodromy matrix.

\textbf{2.1. Zero-curvature representation of CNKG equation:}

Consider the Manakov spectral problem ([12]) of 3 $\times $ 3 linear system

\begin{equation}
\left[ 
\begin{array}{c}
\Psi _{1x} \\ 
\Psi _{2x} \\ 
\Psi _{3x}%
\end{array}%
\right] =U(q_{2},q_{3})\left[ 
\begin{array}{c}
\Psi _{1} \\ 
\Psi _{2} \\ 
\Psi _{3}%
\end{array}%
\right] ,  \tag{2.1}
\end{equation}%
where $U(q_{2},q_{3})=\left[ 
\begin{array}{ccc}
-i\mu & q_{2} & q_{3} \\ 
-q_{2}^{\ast } & i\mu & 0 \\ 
-q_{3}^{\ast } & 0 & i\mu%
\end{array}%
\right] $ with $\mu $ is a nonzero eigenvalue, $\Psi _{1},\Psi _{2}$ and $%
\Psi _{3}$ are linearly independent eigenfunctions, $i^{2}=-1;$ $%
q_{2}=q_{2}(x,t)$ and $q_{3}=q_{3}(x,t)$ are the rapidly decreasing at
infinity complex valued coefficients.The auxiliary spectral problem
described as follows:%
\begin{equation}
\left[ 
\begin{array}{c}
\Psi _{1t} \\ 
\Psi _{2t} \\ 
\Psi _{3t}%
\end{array}%
\right] =V(q_{2},q_{3})\left[ 
\begin{array}{c}
\Psi _{1} \\ 
\Psi _{2} \\ 
\Psi _{3}%
\end{array}%
\right] ,  \tag{2.2}
\end{equation}%
where $V(q_{2},q_{3})=\left[ 
\begin{array}{ccc}
a & b & c \\ 
d & e & f \\ 
k & l & m%
\end{array}%
\right] $ and $a,b,c,d,e,f,k,l$ and $m$ are scalar functions, independent of 
$\Psi _{1},\Psi _{2}$ and $\Psi _{3}.$

\bigskip From (2.1) and (2.2), the zero curvature equation $U_{t}-V_{x}+%
\left[ U,V\right] =0$ yields

\begin{equation*}
\begin{array}{c}
a_{x}=q_{2}d+q_{3}k+q_{2}^{\ast }b+q_{3}^{\ast }c, \\ 
b_{x}=-2i\mu b-q_{2}\alpha +q_{2}e+q_{3}l+q_{2,t}, \\ 
c_{x}=-2i\mu c-q_{3}\alpha +q_{2}f+q_{3}m+q_{3,t},%
\end{array}%
\begin{array}{c}
f_{x}=-q_{2}^{\ast }c-q_{3}d, \\ 
l_{x}=-q_{3}^{\ast }b-q_{2}k, \\ 
m_{x}=-q_{3}^{\ast }c-q_{3}k,%
\end{array}%
\end{equation*}%
\begin{equation}
\tag{2.3}
\end{equation}%
\begin{equation*}
\begin{array}{c}
d_{x}=2i\mu d-q_{2}^{\ast }\alpha +q_{2}^{\ast }e+q_{3}^{\ast
}f-q_{2,t}^{\ast }, \\ 
k_{x}=2i\mu k+q_{2}^{\ast }l+q_{3}^{\ast }m-q_{3}^{\ast }a-q_{3,t}^{\ast },
\\ 
e_{x}=-q_{2}^{\ast }b-q_{2}d,%
\end{array}%
\end{equation*}%
Let the following transformations be applied to the system (2.3):

\begin{eqnarray*}
a &=&\frac{A(x,t)}{\mu },b=\frac{B(x,t)}{\mu },c=\frac{C(x,t)}{\mu }, \\
d &=&\frac{D(x,t)}{\mu },e=\frac{E(x,t)}{\mu },f=\frac{F(x,t)}{\mu }, \\
k &=&\frac{K(x,t)}{\mu },l=\frac{L(x,t)}{\mu },m=\frac{M(x,t)}{\mu }.
\end{eqnarray*}%
As a result the following equations are obtained:

\begin{equation*}
\begin{array}{cc}
A_{x}=q_{2}D+q_{3}K+q_{2}^{\ast }B+q_{3}^{\ast }C, & E_{x}=-q_{2}^{\ast
}B-q_{2}D, \\ 
B_{x}=q_{2}E-q_{2}A+q_{3}L,\text{ }B=-\frac{i}{2}q_{2,t}, & 
L_{x}=-q_{3}^{\ast }B-q_{2}K, \\ 
C_{x}=q_{2}F+q_{3}M-q_{3}A,\text{ }C=-\frac{i}{2}q_{3,t}, & 
M_{x}=-q_{3}^{\ast }C-q_{3}K,%
\end{array}%
\end{equation*}%
\begin{equation}
\tag{2.4}
\end{equation}%
\begin{equation*}
\begin{array}{c}
D_{x}=-q_{2}^{\ast }A+q_{2}^{\ast }E+q_{3}^{\ast }F,\text{ }D=-\frac{i}{2}%
q_{2,t}^{\ast }, \\ 
K_{x}=q_{2}^{\ast }L-q_{3}^{\ast }A+q_{3}^{\ast }M,\text{ }K=-\frac{i}{2}%
q_{3,t}^{\ast }, \\ 
F_{x}=-q_{2}^{\ast }C-q_{3}D,%
\end{array}%
\end{equation*}

The following negative first order AKNS equation is obtained for important
case of spectral problem (2.1).

\begin{proposition}
Let $q_{2}$ and $q_{3}$ be the coefficients of the system (2.1), then the
system of equations (2.4) becomes the following negative order pair of
equations:%
\begin{eqnarray}
q_{2,xt} &=&q_{2}\left[ 2\partial ^{-1}\left( \left\vert
q_{2}^{2}\right\vert \right) _{t}+\partial ^{-1}\left( \left\vert
q_{3}^{2}\right\vert \right) _{t}+\gamma \right] +q_{3}\partial ^{-1}\left(
q_{2}q_{3}^{\ast }\right) _{t},  \TCItag{2.5} \\
q_{3,xt} &=&q_{2}\partial ^{-1}\left( q_{2}^{\ast }q_{3}\right) _{t}+q_{3} 
\left[ \partial ^{-1}\left( \left\vert q_{2}^{2}\right\vert \right)
_{t}+2\partial ^{-1}\left( \left\vert q_{3}^{2}\right\vert \right)
_{t}+\gamma \right] ,  \notag
\end{eqnarray}%
where $\gamma $ is a arbitrary non-zero real constant and $\partial
_{x}^{-1}=\int\limits_{x}^{\infty }dx$\ is indefinite integral with respect
to $x$.

\begin{proof}
It is clearly seen that $A_{x}=-E_{x}-M_{x},$ $B=-D^{\ast },$ $C=-K^{\ast }$
and $F_{x}=-L_{x}^{\ast }$ in the system (2.4). This system becomes

\begin{equation*}
B=-\frac{i}{2}q_{2,t},\text{ }C=-\frac{i}{2}q_{3,t},
\end{equation*}

\begin{equation*}
E_{x}=\frac{i}{2}\left( \left \vert q_{2}^{2}\right \vert \right) _{t},\text{
}F_{x}=\frac{i}{2}\left( q_{2}^{\ast }q_{3}\right) _{t},\text{ }M_{x}=\frac{i%
}{2}\left( \left \vert q_{3}^{2}\right \vert \right) _{t},
\end{equation*}

\begin{equation*}
B_{x}=q_{2}\left( 2E+M\right) -q_{3}F^{\ast },
\end{equation*}

\begin{equation*}
C_{x}=q_{2}F+q_{3}\left( E+2M\right) .
\end{equation*}

In this system, if some terms are integrated from $x$ to $\infty ,$

\begin{eqnarray*}
E &=&-\frac{i}{2}\partial ^{-1}\left( \left \vert e_{2}^{2}\right \vert
\right) _{t}-\frac{i}{2}c_{1}, \\
F &=&-\frac{i}{2}\partial ^{-1}\left( e_{2}^{\ast }e_{3}\right) _{t}, \\
M &=&-\frac{i}{2}\partial ^{-1}\left( \left \vert e_{3}^{2}\right \vert
\right) _{t}-\frac{i}{2}c_{1}, \\
A &=&\frac{i}{2}\partial ^{-1}\left( \left \vert e_{2}^{2}\right \vert
\right) _{t}+\frac{i}{2}\partial ^{-1}\left( \left \vert e_{3}^{2}\right
\vert \right) _{t}-\frac{i}{2}c_{2},
\end{eqnarray*}

are obtained. Here, $c_{1}$ and $c_{2}$ are arbitrary real constants and $%
\partial _{x}^{-1}=\int\limits_{x}^{\infty }dx$\ is indefinite integral with
respect to $x$. Consequently, the matrix $V(q_{2},q_{3})$ in (2.2) is given
by 
\begin{equation*}
V(q_{2},q_{3})=\frac{1}{\mu }\left[ 
\begin{array}{ccc}
\frac{i}{2}\partial ^{-1}\left( \left\vert e_{2}^{2}\right\vert \right) _{t}+%
\frac{i}{2}\partial ^{-1}\left( \left\vert e_{3}^{2}\right\vert \right) _{t}-%
\frac{i}{2}c_{2} & -\frac{i}{2}q_{2,t} & -\frac{i}{2}q_{3,t} \\ 
-\frac{i}{2}q_{2,t}^{\ast } & -\frac{i}{2}\partial ^{-1}\left( \left\vert
e_{2}^{2}\right\vert \right) _{t}-\frac{i}{2}c_{1} & -\frac{i}{2}\partial
^{-1}\left( e_{2}^{\ast }e_{3}\right) _{t} \\ 
-\frac{i}{2}q_{3,t}^{\ast } & -\frac{i}{2}\partial ^{-1}\left( e_{3}^{\ast
}e_{2}\right) _{t} & -\frac{i}{2}\partial ^{-1}\left( \left\vert
e_{3}^{2}\right\vert \right) _{t}-\frac{i}{2}c_{1}%
\end{array}%
\right] .
\end{equation*}

For the compatibility of equations (2.1) and (2.2) the functions $q_{2}$ and 
$q_{3}$ must satisfy the system (2.5), where $\gamma =c_{1}-c_{2},$ $%
(c_{1}\neq c_{2}).$
\end{proof}
\end{proposition}

\begin{remark}
The case $q_{2}=q_{3}=\pm q$ becomes to the equation

\begin{equation*}
q_{xt}=q[2\partial ^{-1}(\left\vert q^{2}\right\vert )_{t}+\partial
^{-1}(\left\vert q^{2}\right\vert )_{t}+\varkappa ]+q\partial
^{-1}(\left\vert q^{2}\right\vert )_{t}
\end{equation*}

\begin{equation*}
\Longrightarrow q_{xt}=4q\partial ^{-1}(\left\vert q^{2}\right\vert
)_{t}+\gamma q
\end{equation*}

and the change $q\rightarrow \frac{1}{\sqrt{2}}q$ with $\gamma =1$ makes it
scalar negative order nonlinear Klein-Gordon Equation (see [3,4]).
\end{remark}

\textbf{2.2. Conservation laws{}:}

Introducing

\begin{equation*}
E=\left( 
\begin{array}{c}
q_{2} \\ 
q_{3}%
\end{array}%
\right) ,\text{ \ \ }\rho =EE^{\dagger }=\left( 
\begin{array}{cc}
\left\vert q_{2}\right\vert ^{2} & q_{2}q_{3}^{\ast } \\ 
q_{2}^{\ast }q_{3} & \left\vert q_{3}\right\vert ^{2}%
\end{array}%
\right)
\end{equation*}

the system (2.5) becomes

\begin{equation*}
E_{xt}=\left[ \partial _{x}^{-1}(\rho _{t})+\partial _{x}^{-1}(Tr\rho
)_{t}I+\gamma I\right] E.
\end{equation*}

Defining

\begin{equation*}
R=\partial _{x}^{-1}(\rho _{t}),
\end{equation*}

we obtain

\begin{equation*}
E_{xt}=\left[ R+(TrR)I+\gamma I\right] E.
\end{equation*}

This form makes the global $SU(2)$ symmetry manifest. Under a constant
transformation

\begin{equation*}
E\rightarrow UE,\text{ \ \ }U\in SU(2),
\end{equation*}

we have

\begin{equation*}
\rho \rightarrow U\rho U^{\dagger },\text{ \ \ }R\rightarrow URU^{\dagger },%
\text{ \ \ }TrR\rightarrow TrR.
\end{equation*}

Hence the equation is $SU(2)-$covariant.

Now use the Cayley-Hamilton theorem. Since

\begin{equation*}
\rho =EE^{\dagger },
\end{equation*}

we have

\begin{equation*}
\det \rho =0.
\end{equation*}

Therefore

\begin{equation*}
\rho ^{2}-(Tr\rho )\rho +(\det \rho )I=0
\end{equation*}

reduces to

\begin{equation*}
\rho ^{2}=(Tr\rho )\rho .
\end{equation*}

Thus $\rho $ has rank at most one (rank one whenever $E\neq 0$).

For the integrated matrix $R,$ however, there is no reason to assume $\det
R=0.$ Instead, decompose $R$ into trace and traceless parts:

\begin{equation*}
R=Q+\frac{1}{2}(TrR)I,\text{ \ \ }TrQ=0.
\end{equation*}

Since $R$ is Hermitian, $Q$ is traceless Hermitian matrix, i.e. $Q\in
isu(2). $

Substituting into the equation gives

\begin{equation*}
E_{xt}=\left[ Q+\frac{3}{2}(TrR)I+\gamma I\right] E.
\end{equation*}

Equivalently,

\begin{equation*}
E_{xt}=\left[ Q+\frac{3}{2}\partial _{x}^{-1}(Tr\rho )_{t}I+\gamma I\right]
E.
\end{equation*}

Assuming $\gamma $ is real and the integration constants entering $\partial
_{x}^{-1}$ preserve Hermiticity, the matrix multiplying $E$ is Hermitian.
Hence

\begin{equation*}
E^{\dagger }E_{xt}-E_{xt}^{\dagger }E=0.
\end{equation*}

Using

\begin{equation*}
E^{\dagger }E_{xt}=\partial _{x}(E^{\dagger }E_{t})-E_{x}^{\dagger }E_{t},
\end{equation*}

and

\begin{equation*}
E_{xt}^{\dagger }E=\partial _{t}(E_{x}^{\dagger }E)-E_{x}^{\dagger }E_{t},
\end{equation*}

we obtain the conservation law

\begin{equation*}
\partial _{x}(E^{\dagger }E_{t})-\partial _{t}(E_{x}^{\dagger }E)=0.
\end{equation*}

Taking the imaginary part yields the real conservation law

\begin{equation*}
\partial _{x}\func{Im}(E^{\dagger }E_{t})-\partial _{t}\func{Im}%
(E_{x}^{\dagger }E)=0.
\end{equation*}

Therefore, for periodic or rapidly decaying boundary conditions,

\begin{equation*}
\frac{d}{dt}\int \func{Im}(E_{x}^{\dagger }E)dx=0.
\end{equation*}

\textbf{2.3 Integrals of motion:}

The Manakov spectral problem (2.1) admits an infinite hierarchy of integrals
of motion, obtainable either from the Hermiticity argument of subsection 2.2
or from the asymptotic expansion of the associated Riccati equation. The
first five members are given below, first in vector/matrix form and then in
explicit component form.

The integrals of motion are

\bigskip

\begin{equation}
I_{1}=\int E^{\dagger }Edx,  \tag{2.6}
\end{equation}

\begin{equation}
I_{2}=\int \func{Im}(E_{x}^{\dagger }E)dx,  \tag{2.7}
\end{equation}

\begin{equation}
I_{3}=\int \left[ (E^{\dagger }E)^{2}-E_{x}^{\dagger }E_{x}\right] dx, 
\tag{2.8}
\end{equation}

\begin{equation}
I_{4}=3\int (E^{\dagger }E)\func{Im}(E_{x}^{\dagger }E)dx-\int \func{Im}%
(E_{xx}^{\dagger }E_{x})dx,  \tag{2.9}
\end{equation}

\begin{equation}
I_{5}=\int \left[ E_{xx}^{\dagger }E_{xx}-4(E^{\dagger }E)(E_{x}^{\dagger
}E_{x})-\frac{3}{2}(E_{x}^{\dagger }E+E^{\dagger }E_{x})^{2}-2\left[ \func{Im%
}(E_{x}^{\dagger }E)\right] ^{2}+2(E^{\dagger }E)^{3}\right] dx.  \tag{2.10}
\end{equation}

\bigskip

Each $I_{n}$ is manifestly $SU(2)-$invariant, since $E^{\dagger }E$, $%
E_{x}^{\dagger }E$, $E_{x}^{\dagger }E_{x}$, $E_{xx}^{\dagger }E_{x}$, $%
E_{xx}^{\dagger }E_{xx}$ are all invariant under the constant transformation 
$E\rightarrow UE$, $U\in SU(2)$ of Section 2.2.

Writing out $E^{\dagger }E=\left\vert q_{2}\right\vert ^{2}+\left\vert
q_{3}\right\vert ^{2}$, etc., the same five integrals read

\begin{equation}
I_{1}=\int (\left\vert q_{2}\right\vert ^{2}+\left\vert q_{3}\right\vert
^{2})dx,  \tag{2.6$^{\shortmid }$}
\end{equation}

\begin{equation}
I_{2}=\frac{1}{2i}\int (q_{2,x}^{\ast }q_{2}-q_{2}^{\ast
}q_{2,x}+q_{3,x}^{\ast }q_{3}-q_{3}^{\ast }q_{3,x})dx, 
\tag{2.7$^{\shortmid
}$}
\end{equation}

\begin{equation}
I_{3}=\int \left[ (\left\vert q_{2}\right\vert ^{2}+\left\vert
q_{3}\right\vert ^{2})^{2}-\left\vert q_{2,x}\right\vert ^{2}-\left\vert
q_{3,x}\right\vert ^{2}\right] dx,  \tag{2.8$^{\shortmid }$}
\end{equation}

\begin{eqnarray}
I_{4} &=&\frac{3}{2i}\int (\left\vert q_{2}\right\vert ^{2}+\left\vert
q_{3}\right\vert ^{2})(q_{2,x}^{\ast }q_{2}-q_{2}^{\ast
}q_{2,x}+q_{3,x}^{\ast }q_{3}-q_{3}^{\ast }q_{3,x})dx 
\TCItag{2.9$^{\shortmid }$} \\
&&-\frac{1}{2i}\int (q_{2,xx}^{\ast }q_{2,x}-q_{2,xx}q_{2,x}^{\ast
}+q_{3,xx}^{\ast }q_{3,x}-q_{3,xx}q_{3,x}^{\ast })dx,  \notag
\end{eqnarray}

\begin{eqnarray}
I_{5} &=&\int \{\left\vert q_{2,xx}\right\vert ^{2}+\left\vert
q_{3,xx}\right\vert ^{2}-4(\left\vert q_{2}\right\vert ^{2}+\left\vert
q_{3}\right\vert ^{2})(\left\vert q_{2,x}\right\vert ^{2}+\left\vert
q_{3,x}\right\vert ^{2})  \TCItag{2.10$^{\shortmid }$} \\
&&-\frac{3}{2}(q_{2,x}^{\ast }q_{2}+q_{2}^{\ast }q_{2,x}+q_{3,x}^{\ast
}q_{3}+q_{3}^{\ast }q_{3,x})^{2}  \notag \\
&&+\frac{1}{2}(q_{2,x}^{\ast }q_{2}-q_{2}^{\ast }q_{2,x}+q_{3,x}^{\ast
}q_{3}-q_{3}^{\ast }q_{3,x})^{2}  \notag \\
&&+2(\left\vert q_{2}\right\vert ^{2}+\left\vert q_{3}\right\vert
^{2})^{3}\}dx.  \notag
\end{eqnarray}

\bigskip

\textbf{2.4. Hamiltonian structure and negative-order flow:}

The Manakov spectral problem (2.1) also carries the standard canonical
Poisson bracket

\begin{equation}
\left\{ q_{i}(x),q_{j}^{\ast }(y)\right\} =i\delta _{ij}\delta (x-y),\text{ }%
i,j=2,3,  \tag{2.11}
\end{equation}

with respect to which

\begin{equation}
\frac{\delta I_{1}}{\delta q_{2}^{\ast }}=q_{2},\text{ \ \ }\frac{\delta
I_{1}}{\delta q_{3}^{\ast }}=q_{3},\text{ \ \ }i.e.\text{ }\frac{\delta I_{1}%
}{\delta E^{\dagger }}=E.  \tag{2.12}
\end{equation}

Under (2.11), $I_{1}$ generates the trivial flow $q_{i,t}=iq_{i}$, the CNKG
equation (2.5), being of negative order, is instead generated by $I_{1}$
through the nonlocal Hamiltonian operator

\begin{equation}
E_{t}=-\partial _{x}^{-1}[(R+(TrR)I+\gamma I)\frac{\delta I_{1}}{\delta
E^{\dagger }}].  \tag{2.13}
\end{equation}

\section{Manakov Spectral Problem}

As noted in the previous section, the principal advantage of representing
nonlinear equations through the zero-curvature condition lies in the
possibility of finding the explicit solutions by means of the auxiliary
linear equations (2.1)--(2.2). One of these equations provides a
transformation from the functions of interest to the scattering data
associated with the operator $U$. Moreover, the straightforward application
of the inverse scattering transform method requires only the
fundamentalconcepts of scattering theory. Therefore, we briefly review the
essential aspects of the Manakov problem described by equation (2.1).

Let us consider the equations of Manakov's spectral problem in the form

\begin{eqnarray}
\frac{d\psi _{1}}{dx}+i\mu \psi _{1} &=&q_{2}\psi _{2}+q_{3}\psi _{3}, 
\TCItag{3.1} \\
\frac{d\psi _{2}}{dx}-i\mu \psi _{2} &=&-q_{2}^{\ast }\psi _{1},  \notag \\
\frac{d\psi _{3}}{dx}-i\mu \psi _{3} &=&-q_{3}^{\ast }\psi _{1},  \notag
\end{eqnarray}%
where the\textquotedblright potentials\textquotedblright \ $q_{2,3}$ are
taken at some point $t$ (for example, $t=0$) and being functions of $x$. The
major stages of the analysis of a spectral problem are as follows:

\textbf{3.1. Jost Functions: }

First,we note a general property of the solutions to equation (3.1). Let $%
\psi =(\psi _{1},\psi _{2},\psi _{3})^{T}$ and $\varphi =(\varphi
_{1},\varphi _{2},\varphi _{3})^{T}$ be two solutions of equation (3.1)
corresponding to the same real value of $\mu $.Then,

\begin{equation*}
\frac{d}{dx}((\psi ^{T})^{\ast }\varphi )=0.
\end{equation*}

\bigskip

Here, $T$ denotes the transpose. Let $R$ denote the matrix constructed from
the column vectors of the solutions of equation (3.1). Then,

\begin{equation*}
\frac{d}{dx}\det R=i\mu \det R.
\end{equation*}

\bigskip

If the potentials $q_{2,3}$ decay sufficiently rapidly as $x\rightarrow \pm
\infty ,$ the solution of equation (3.1) is uniquely determined by either of
its asymptotic behaviours in the limits $x\rightarrow \pm \infty $. We
therefore consider two sets of solutions to equation (3.1), denoted by $\Phi
^{(i)}(x,\mu )$ and $\Psi ^{(i)}(x,\mu ),$ $i=1,2,3.$ These solutions,
commonly referred to as the Jost functions, are characterized by the
following asymptotic forms for real values of $\mu :$

\bigskip

\begin{eqnarray*}
\Phi ^{(1)} &\rightarrow &\left( 
\begin{array}{c}
1 \\ 
0 \\ 
0%
\end{array}%
\right) e^{-i\mu x},\text{ \ \ }\Phi ^{(2)}\rightarrow \left( 
\begin{array}{c}
0 \\ 
1 \\ 
0%
\end{array}%
\right) e^{i\mu x},\text{ \ \ }\Phi ^{(3)}\rightarrow \left( 
\begin{array}{c}
0 \\ 
0 \\ 
1%
\end{array}%
\right) e^{i\mu x},\text{ \ \ }x\rightarrow -\infty , \\
\Psi ^{(1)} &\rightarrow &\left( 
\begin{array}{c}
1 \\ 
0 \\ 
0%
\end{array}%
\right) e^{-i\mu x},\text{ \ \ }\Psi ^{(2)}\rightarrow \left( 
\begin{array}{c}
0 \\ 
1 \\ 
0%
\end{array}%
\right) e^{i\mu x},\text{ \ \ }\Psi ^{(3)}\rightarrow \left( 
\begin{array}{c}
0 \\ 
0 \\ 
1%
\end{array}%
\right) e^{i\mu x},\text{ \ \ }x\rightarrow +\infty .
\end{eqnarray*}

\bigskip

The Jost functions constitute a complete set of linearly independent
solutions to equation (3.1) and, therefore, can be represented in terms of
one another as follows:

\begin{equation}
\Phi ^{(i)}(x,\mu )=\underset{j=1}{\overset{3}{\sum }}S_{ij}(\mu )\Psi
^{(j)}(x,\mu ),  \tag{3.2}
\end{equation}%
where $S_{ij}$ denotes the transition, or scattering matrix. The following
relations follow directly from the properties established above:

\bigskip

\begin{eqnarray*}
(\Phi ^{(i)T})^{\ast }(x,\mu )\Phi ^{(j)}(x,\mu ) &=&\delta _{ij}, \\
(\Psi ^{(i)T})^{\ast }(x,\mu )\Psi ^{(j)}(x,\mu ) &=&\delta _{ij}, \\
(\Psi ^{(j)T})^{\ast }(x,\mu )\Phi ^{(i)}(x,\mu ) &=&S_{ij}(\mu ).
\end{eqnarray*}

\bigskip

Here, $T$ denotes the transpose. In a scattering problem, the analyticity
properties of the Jost functions and the transition matrix play a
fundamental role.

\textbf{3.2 Analytical Properties:}

To investigate the analyticity properties of the Jost functions, it is
convenient to

rewrite equation (3.1) in its integral form for $i=1,2,3:$

\bigskip

\begin{equation*}
\Phi ^{(i)}(x,\mu )=F^{(i)}(x,\mu )+\overset{x}{\underset{-\infty }{\int }}%
e^{-i\mu I(x-x^{\prime })}E(x^{\prime })\Phi ^{(i)}(x^{\prime })dx^{\prime },
\end{equation*}

\begin{equation*}
\Psi ^{(i)}(x,\mu )=F^{(i)}(x,\mu )-\overset{x}{\underset{-\infty }{\int }}%
e^{-i\mu I(x-x^{\prime })}E(x^{\prime })\Psi ^{(i)}(x^{\prime })dx^{\prime },
\end{equation*}

where

\bigskip

\begin{eqnarray*}
F^{(1)}(x,\mu ) &=&h^{(1)}e^{-i\mu x},\text{ \ \ }F^{(2,3)}(x,\mu
)=h^{(2,3)}e^{i\mu x}, \\
h^{(1)} &=&\left( 
\begin{array}{c}
1 \\ 
0 \\ 
0%
\end{array}%
\right) ,\text{ \ \ }h^{(2)}=\left( 
\begin{array}{c}
0 \\ 
1 \\ 
0%
\end{array}%
\right) ,\text{ \ \ }h^{(3)}=\left( 
\begin{array}{c}
0 \\ 
0 \\ 
1%
\end{array}%
\right) , \\
I &=&\left( 
\begin{array}{ccc}
1 & 0 & 0 \\ 
0 & -1 & 0 \\ 
0 & 0 & -1%
\end{array}%
\right) ,\text{ \ \ }E=\left( 
\begin{array}{ccc}
0 & q_{2} & q_{3} \\ 
-q_{2}^{\ast } & 0 & 0 \\ 
-q_{3}^{\ast } & 0 & 0%
\end{array}%
\right) =-(E^{T})^{\ast }.
\end{eqnarray*}

\bigskip

Here, $T$ denotes the transpose. To examine the analyticity properties of $%
\Phi ^{(1)},$ it is natural to introduce the following functions:

\bigskip

\begin{equation*}
\chi _{i}(x,\mu )=\varphi _{i}^{(1)}(x,\mu )e^{i\mu x},\text{ }i=1,2,3,
\end{equation*}%
for which the following relations hold:

\begin{eqnarray*}
\chi _{1}(x,\mu ) &=&1-\underset{-\infty }{\overset{x}{\int }}dx^{\prime }%
\underset{-\infty }{\overset{x^{\prime }}{\int }}dx^{\prime \prime }e^{2i\mu
(x^{\prime }-x^{\prime \prime })}\chi _{1}(x^{\prime \prime },\mu
)[q_{2}(x^{\prime })q_{2}^{\ast }(x^{\prime \prime })+q_{3}(x^{\prime
})q_{3}^{\ast }(x^{\prime \prime })], \\
\chi _{2}(x,\mu ) &=&-\underset{-\infty }{\overset{x}{\int }}e^{-2i\mu
(x^{\prime }-x)}q_{2}^{\ast }(x^{\prime })\chi _{1}(x^{\prime },\mu
)dx^{\prime }, \\
\chi _{3}(x,\mu ) &=&-\underset{-\infty }{\overset{x}{\int }}e^{-2i\mu
(x^{\prime }-x)}q_{3}^{\ast }(x^{\prime })\chi _{3}(x^{\prime },\mu
)dx^{\prime }.
\end{eqnarray*}

\bigskip

It follows immediately that $\chi _{1}(x,\mu )$ admits an analytic
continuation into the upper half-plane of the complex variable $\mu $.
Furthermore, $\chi _{1}(x,\mu )\rightarrow 1$ as $\left \vert \mu
\right
\vert \rightarrow \infty $. The functions $\chi _{2,3}(x,\mu )$ are
likewise analytic in the same domain and satisfy $\chi _{2,3}(x,\mu
)\rightarrow 0$ as $\left \vert \mu \right \vert \rightarrow \infty $.
Consequently, $\Phi ^{(1)}(x,\mu )$ also admits an analytic continuation
into the upper half-plane of $\mu $. The analyticity properties of the
remaining Jost functions can be established by an analogous argument.

In summary, the functions $\Phi ^{(1)},$ $\Psi ^{(2)},$ $\Psi ^{(3)},$ $%
(\Psi ^{(1)T})^{\ast },$ $(\Phi ^{(2)T})^{\ast },$ $(\Phi ^{(3)T})^{\ast }$
are analytic in the region $\func{Im}\mu \geq 0$, whereas $\Psi ^{(1)},$ $%
\Phi ^{(2)},$ $\Phi ^{(3)},$ $(\Phi ^{(1)T})^{\ast },$ $(\Psi ^{(2)T})^{\ast
},$ $(\Psi ^{(3)T})^{\ast }$ are analytic in the region $\func{Im}\mu \leq 0$%
. Here, $T$ denotes the transpose.

The analyticity properties make it possible to construct the so-called
triangular representations. As an illustrative example, let us consider $%
\Phi ^{(1)}(x,\mu )$. Introducing the functions $\chi _{1,2,3}(x,\mu )$, we
then perform a Fourier transformation:

\bigskip

\begin{eqnarray*}
X_{1}(x,\xi ) &=&\frac{1}{2\pi }\overset{\infty }{\underset{-\infty }{\int }}%
[\chi _{1}(x,\mu )-1]e^{i\lambda \xi }d\lambda , \\
X_{2,3}(x,\xi ) &=&\frac{1}{2\pi }\overset{\infty }{\underset{-\infty }{\int 
}}\chi _{2,3}(x,\mu )e^{i\lambda \xi }d\lambda .
\end{eqnarray*}

\bigskip

Using the analyticity properties and closing the contour of integration in
the upper half-plane, we obtain

\begin{equation*}
X_{i}(x,\xi )=0\text{ for }\xi >0\text{.}
\end{equation*}

Thus

\begin{eqnarray*}
\chi _{1}(x,\mu ) &=&1+\overset{0}{\underset{-\infty }{\int }}X_{1}(x,\xi
)e^{-i\lambda \xi }d\xi , \\
\chi _{2,3}(x,\mu ) &=&\overset{0}{\underset{-\infty }{\int }}X_{2,3}(x,\xi
)e^{-i\lambda \xi }d\xi .
\end{eqnarray*}

After a simple change of variables in the integral, it is straightforward to
show that

\begin{equation*}
\Phi ^{(1)}(x,\mu )=F^{(1)}(x,\mu )+\overset{x}{\underset{-\infty }{\int }}%
N^{(1)}(x,\xi )e^{-i\lambda \xi }d\xi .
\end{equation*}

By an analogous argument, one obtains

\begin{eqnarray*}
\Phi ^{(2,3)}(x,\mu ) &=&F^{(2,3)}(x,\mu )+\overset{x}{\underset{-\infty }{%
\int }}N^{(2,3)}(x,\xi )e^{i\lambda \xi }d\xi , \\
\Psi ^{(1)}(x,\mu ) &=&F^{(1)}(x,\mu )+\overset{\infty }{\underset{x}{\int }}%
K^{(1)}(x,\xi )e^{-i\lambda \xi }d\xi , \\
\Psi ^{(2,3)}(x,\mu ) &=&F^{(2,3)}(x,\mu )+\overset{\infty }{\underset{x}{%
\int }}K^{(2,3)}(x,\xi )e^{i\lambda \xi }d\xi ,
\end{eqnarray*}%
where the column vectors $N^{(i)}(x,\xi )$ and $K^{(i)}(x,\xi ),i=1,2,3,$
denote the kernels of the triangular representations and are independent of
the parameter $\mu $.

We now turn to the analyticity properties of the transition matrix. It
follows that $S_{11}(\mu )=\Psi ^{(1)+}(x,\mu )\Phi ^{(1)}(x,\mu )$ is
analytic in the region $\func{Im}\mu \geq 0,$ whereas $S_{22}(\mu ),$ $%
S_{23}(\mu ),$ $S_{32}(\mu )$ and $S_{33}(\mu )$ are analytic in the region $%
\func{Im}\mu \leq 0.$ Furthermore, $S_{11}(\mu )\rightarrow 1$ as $%
\left
\vert \mu \right \vert \rightarrow \infty .$ In addition, by
definition, the scattering matrix is both unitary, $(S^{T})^{\ast }=S,$ and
unimodular, $\det S=1.$

\textbf{3.3. Scattering Data: }

We now consider the bounded solutions of equation (3.1) that vanish as $%
x\rightarrow \pm \infty .$ Let $\mu =i\zeta ,$ $\zeta >0.$ Then,

\begin{eqnarray*}
\widetilde{\Phi }_{k}^{(i)} &\rightarrow &e^{\zeta x}\left( 
\begin{array}{ccc}
1 & 0 & 0 \\ 
0 & 0 & 0 \\ 
0 & 0 & 0%
\end{array}%
\right) ,\text{ for }x\rightarrow -\infty , \\
\widetilde{\Psi }_{k}^{(i)} &\rightarrow &e^{-\zeta x}\left( 
\begin{array}{ccc}
1 & 0 & 0 \\ 
0 & 0 & 0 \\ 
0 & 0 & 0%
\end{array}%
\right) ,\text{ for }x\rightarrow +\infty .
\end{eqnarray*}

Therefore $\widetilde{\Phi }^{(1)}=C_{12}\widetilde{\Psi }^{(2)}+C_{13}%
\widetilde{\Psi }^{(3)}$. Since $\Phi ^{(1)}$ and $S_{11}$ admit analytic
continuation into the upper half-plane, the bounded solutions described
above correspond to the zeros of $S_{11},$ namely, $S_{11}(i\zeta )=0.$

The quantities $C_{12},$ $C_{13},$ $S_{ij}(\mu ),$ and the zeros of $%
S_{11}(\mu )$ constitute a set of scattering data that is uniquely
determined by the potentials $q_{2,3}(x).$ These data provide a direct
characterization of the scattering problem.

\textbf{3.4. GLM Equation: }

We now show that the scattering data can be used to reconstruct the
potentials $q_{2,3}(x)$ and to derive the Gel'fand--Levitan--Marchenko
equation, thereby providing a solution to the inverse scattering problem. We
begin with the first equation of (3.2),

\begin{equation*}
\Phi ^{(1)}(x,\mu )=S_{11}(\mu )\Psi ^{(1)}(x,\mu )+S_{12}(\mu )\Psi
^{(2)}(x,\mu )+S_{13}(\mu )\Psi ^{(3)}(x,\mu ),
\end{equation*}%
can be transformed, $\xi >x$, into the form

\begin{eqnarray}
&&\frac{1}{2\pi }\overset{\infty }{\underset{-\infty }{\int }}[\frac{\Phi
^{(1)}(x,\mu )}{S_{11}(\mu )}-\left( 
\begin{array}{c}
1 \\ 
0 \\ 
0%
\end{array}%
\right) e^{-i\mu x}]e^{i\lambda \xi }d\mu  \TCItag{3.3} \\
&=&\frac{1}{2\pi }\overset{\infty }{\underset{-\infty }{\int }}[\Psi
^{(1)}(x,\mu )-\left( 
\begin{array}{c}
1 \\ 
0 \\ 
0%
\end{array}%
\right) e^{-i\mu x}]e^{i\lambda \xi }d\mu  \notag \\
&&+\frac{1}{2\pi }\overset{\infty }{\underset{-\infty }{\int }}[\frac{%
S_{12}(\mu )}{S_{11}(\mu )}\Psi ^{(2)}(x,\mu )e^{i\lambda \xi }d\mu +\frac{1%
}{2\pi }\overset{\infty }{\underset{-\infty }{\int }}[\frac{S_{13}(\mu )}{%
S_{11}(\mu )}\Psi ^{(3)}(x,\mu )e^{i\lambda \xi }d\mu .  \notag
\end{eqnarray}

On the left-hand side of this equation, the integrand is analytic through
out the complex $\mu $-plane, except at the zeros of $S_{11}(\mu )\neq 0.$
Let these zeros be denoted by $\mu _{n}$, where $\func{Im}\mu _{n}>0$, $%
n=1,...,N$, and assume that they are simple. Then the left-hand side of
equation (3.3) can be written as

\bigskip

\begin{eqnarray*}
i\underset{n=1}{\overset{N}{\sum }}\frac{\Phi ^{(1)}(x,\mu _{n})}{%
S_{11}^{^{\prime }}(\mu _{n})}e^{i\mu _{n}\xi },\text{ }S_{11}^{^{\prime
}}(\mu _{n}) &=&\left. \frac{dS_{11}(\mu )}{d\mu }\right \vert _{\mu =\mu
_{n}}, \\
\Phi ^{(1)}(x,\mu _{n}) &=&C_{12}^{(n)}\Psi ^{(2)}(x,\mu
_{n})+C_{13}^{(n)}\Psi ^{(3)}(x,\mu _{n}).
\end{eqnarray*}

\bigskip

In addition, $\Psi ^{(2)}(x,\mu _{n})$ and $\Psi ^{(3)}(x,\mu _{n})$ must be
expressed using the triangular representation.

One should first replace $\Psi ^{(2)}$ and $\Psi ^{(3)}$ by their triangular
representations rearranging the right-hand side of equation (3.3). The
evaluation of the integrals in (3.3) then yields

\begin{equation}
-K^{(1)}(x,\xi )=\underset{p}{\sum }F_{1p}(x+\xi )h^{(p)}+\underset{p}{\sum }%
\underset{x}{\overset{\infty }{\int }}K^{(p)}(x,\theta )F_{1p}(\theta +\xi
)d\theta ,  \tag{3.4}
\end{equation}%
where

\begin{equation}
F_{1p}(\theta )=-i\underset{n=1}{\overset{N}{\sum }}\frac{C_{1p}^{(n)}}{%
S_{11}^{^{\prime }}(\mu _{n})}e^{i\mu _{n}\theta }+\frac{1}{2\pi }\overset{%
\infty }{\underset{-\infty }{\int }}\frac{S_{1p}(\mu )}{S_{11}(\mu )}e^{i\mu
\theta }d\mu ,\text{ }p=2,3,  \tag{3.5}
\end{equation}%
and $K^{(i)}(x,\theta )$ denote the kernels of the triangular
representations of the Jost functions $\Psi ^{(i)}.$

We now proceed to transform the remaining equations in (3.2),

\begin{equation*}
\Phi ^{(p)}(x,\mu )=S_{p1}(\mu )\Psi ^{(1)}(x,\mu )+\underset{\gamma }{\sum }%
S_{p\gamma }(\mu )\Psi ^{(\gamma )}(x,\mu ),\text{ }p,\gamma =2,3.
\end{equation*}

\bigskip

Let $W$ denote the matrix inverse to $S_{p\gamma },$ defined by $\underset{%
_{p^{\prime }}}{\sum }W_{pp^{\prime }}S_{p^{\prime }\gamma }=\delta
_{p\gamma }$. Then, by analogy with the previous equation, we obtain

\begin{equation*}
\underset{p^{\prime }}{\sum }W_{pp^{\prime }}\Phi ^{(p^{\prime })}(x,\mu )=%
\underset{p^{\prime }}{\sum }W_{pp^{\prime }}S_{p^{\prime }1}\Psi
^{(1)}(x,\mu )+\Psi ^{(p)}(x,\mu ).
\end{equation*}

\bigskip

To proceed with the transformation of this equation, it is necessary to
determine the relation between $W$ and $S_{11}(\mu ),$ $C_{12}$ and $C_{13}.$
Using the unitarity and unimodularity properties, it is straightforward to
show that

\bigskip

\begin{equation*}
W_{22}=\frac{S_{33}}{S_{11}^{\ast }},\text{ }W_{23}=-\frac{S_{23}}{%
S_{11}^{\ast }},\text{ }W_{32}=-\frac{S_{32}}{S_{11}^{\ast }},\text{ }W_{33}=%
\frac{S_{22}}{S_{11}^{\ast }},
\end{equation*}%
and also

\bigskip

\begin{equation*}
\underset{p^{\prime }=2}{\overset{3}{\sum }}W_{pp^{\prime }}S_{p^{\prime
}1}=-\frac{1}{S_{11}^{\ast }}\left \{ 
\begin{array}{c}
S_{12}^{\ast }\text{ for }p=2, \\ 
S_{13}^{\ast }\text{ for }p=3.%
\end{array}%
\right.
\end{equation*}

\bigskip

The equation can now be rewritten in analogy with (3.3) as

\begin{eqnarray}
&&\frac{1}{2\pi }\overset{\infty }{\underset{-\infty }{\int }}\left(
W_{pp^{\prime }}\Phi ^{(p^{\prime })}-h^{(p)}e^{i\mu x}\right) e^{-i\lambda
\mu \xi }d\mu  \TCItag{3.6} \\
&&\overset{\xi \geq x}{=}\frac{1}{2\pi }\overset{\infty }{\underset{-\infty }%
{\int }}\left( \Psi ^{(p)}-h^{(p)}e^{i\mu x}\right) e^{-i\lambda \xi }d\mu -%
\frac{1}{2\pi }\overset{\infty }{\underset{-\infty }{\int }}\frac{%
S_{1p}^{\ast }}{S_{11}^{\ast }}\Psi ^{(1)}e^{-i\mu \xi }d\mu .  \notag
\end{eqnarray}

The integrand on the left-hand side of equation (3.6) is analytic in the
region $\func{Im}\mu \leq 0,$ except at the points $\mu _{n}^{\ast },$ where
it has simple poles, as anticipated earlier.

To evaluate the residues, we assume that the potentials $q_{2,3}(x)$ vanish
outside a finite interval. Under the assumption, $\Phi ^{(i)},$ $\Psi ^{(i)}$
and $S_{ij}$ are analytic throughout the entire $\mu $. Hence, the residue
is given by%
\begin{equation*}
-[\frac{S_{1p}^{\ast }}{S_{11}^{\ast ^{\prime }}}\Psi ^{(1)}e^{-i\mu \xi
}]_{\mu =\mu _{n}}.
\end{equation*}

\bigskip Letting the cut-radius tend to infinity, we obtain that the residue
of the function $\left( W_{pp^{\prime }}\Phi ^{(p^{\prime })}-h^{(p)}e^{i\mu
x}\right) e^{-i\mu \xi }$ at the point $\mu _{n}^{\ast }$ given by

\begin{equation*}
\frac{C_{1p}^{\ast }}{[S_{11}^{^{\prime }}(\mu _{n})]^{\ast }}\Psi
^{(1)}(\mu _{n}^{\ast })e^{-i\mu _{n}^{\ast }\xi }.
\end{equation*}

Further integration of equation (3.4), together with the use of the
triangular representation, yields

\begin{equation}
K^{(p)}(x,\xi )=F_{p1}(x+\xi )h^{(1)}+\overset{\infty }{\underset{x}{\int }}%
K^{(1)}(x,\theta )F_{p1}(\theta +\xi )d\theta ,\text{ \ }F_{p1}=F_{1p}^{\ast
},\text{ \ }p=2,3.  \tag{3.7}
\end{equation}

\bigskip

The Gel'fand--Levitan--Marchenko equations (3.4) provide a means of
determining the kernels of the triangular representations of the Jost
functions from the scattering data.

Finally, to establish the relation between the kernels $K^{(i)}(x,\xi )$ of
the Jost functions and the potentials $q_{2,3}(x),$ it is necessary to
substitute the triangular representation into the original equations (3.1).
After some straightforward algebra, equations (3.1) yield

\begin{equation}
q_{2}(x)=-2K_{1}^{(2)}(x,x)=2K_{2}^{(1)\ast }(x,x),\text{ \ }%
q_{3}(x)=-2K_{1}^{(3)}(x,x)=2K_{3}^{(1)\ast }(x,x),  \tag{3.8}
\end{equation}%
where the lower index of $K_{j}^{(i)}$ numbers the elements $j$ of the
column vector $K^{(i)}$.

\section{\protect\bigskip\ Explicit Solutions in Inverse Scattering Method}

\textbf{4.1 Evolutionary GLM\ Equation:}

Before examining the spectral data of Manakov's problem as a function of $t$%
, we first derive the equations governing the evolution of the scattering
data for a general choice of the $3\times 3$ matrices $U$ and $V$ in
equations (2.1)-(2.2).

At each fixed value of $t$, the Jost functions associated with problem (2.1)
are defined by their asymptotic behavior

\bigskip

\begin{eqnarray*}
\Phi ^{(j)}(x,t) &\rightarrow &h^{(j)}\exp (i\mu _{j}^{(-)}x),\text{ \ }%
x\rightarrow -\infty , \\
\Psi ^{(j)}(x,t) &\rightarrow &h^{(j)}\exp (i\mu _{j}^{(+)}x),\text{ \ }%
x\rightarrow +\infty ,
\end{eqnarray*}

\bigskip

where

\bigskip

\begin{equation*}
h_{k}^{(j)}=\delta _{jk},\text{ \ }\underset{x\rightarrow \pm \infty }{\lim }%
(U)_{jk}=i\mu _{j}^{(\pm )}\delta _{jk},\text{ \ }j,k=1,2,3.
\end{equation*}

\bigskip

The dependence on the spectral parameter $\mu $ of $\Phi ^{(j)}$ and $\Psi
^{(j)}$ is suppressed for notational simplicity.

The transition matrix is defined in the standard way,

\bigskip

\begin{equation*}
\Phi ^{(i)}=\underset{k=1}{\overset{3}{\sum }}S_{ik}\Psi ^{(k)},
\end{equation*}%
and with $x\rightarrow +\infty $ differentiation of this equation results in

\bigskip

\begin{equation*}
\frac{\partial \Phi ^{(i)}}{\partial t}=\underset{k=1}{\overset{3}{\sum }}%
\frac{\partial S_{ik}}{\partial t}\Psi ^{(k)}.
\end{equation*}

\bigskip

At each fixed value of $t$, the Jost functions are related by linear
transformations to the functions $\widetilde{\Phi }^{(n)}(x,t)$ which are
solutions of equations (2.1)-(2.2),

\bigskip

\begin{equation*}
\Phi ^{(i)}(x,t)=\underset{k=1}{\overset{3}{\sum }}Q_{ik}(t)\widetilde{\Phi }%
^{(k)}(x,t),\text{ \ }\widetilde{\Phi }^{(k)}(x,t)=\underset{k=1}{\overset{3}%
{\sum }}P_{kn}(t)\Phi ^{(n)}(x,t).
\end{equation*}

\bigskip

At the initial time $t=0,$ these functions coincide with the Jost functions,

\bigskip

\begin{equation*}
\widetilde{\Phi }^{(k)}(x,t=0)=\Phi ^{(k)}(x,t=0).
\end{equation*}

\bigskip

Thereby

\bigskip

\begin{equation}
\frac{\partial \Phi ^{(i)}}{\partial t}=\underset{k=1}{\overset{3}{\sum }}(%
\frac{\partial Q_{ik}}{\partial t}\widetilde{\Phi }^{(k)}+Q_{ik}V\widetilde{%
\Phi }^{(k)})=\underset{k,n=1}{\overset{3}{\sum }}(\frac{\partial Q_{ik}}{%
\partial t}P_{kn}\Phi ^{(n)}+Q_{ik}P_{kn}V\Phi ^{(n)}).  \tag{4.1}
\end{equation}

\bigskip

Besides

\bigskip

\begin{equation*}
\frac{\partial \widetilde{\Phi }^{(i)}}{\partial t}=V\widetilde{\Phi }^{(i)}=%
\underset{n=1}{\overset{3}{\sum }}\frac{dP_{in}}{dt}\Phi ^{(n)}\text{ \ for }%
x\rightarrow -\infty ,
\end{equation*}

\bigskip

or

\bigskip

\begin{equation*}
\underset{n=1}{\overset{3}{\sum }}P_{in}V(-\infty )\Phi ^{(n)}=\underset{n=1}%
{\overset{3}{\sum }}\frac{dP_{in}}{dt}\Phi ^{(n)},\text{ \ }V(-\infty )=%
\underset{x\rightarrow -\infty }{\lim }V,\text{ \ }V_{ik}(-\infty
)=a_{i}(-\infty )\delta _{ik}.
\end{equation*}

\bigskip

Hence

\bigskip

\begin{equation*}
\frac{dP_{in}}{dt}=a_{n}(-\infty )P_{in},
\end{equation*}

\bigskip

which gives

\bigskip

\begin{equation*}
P_{in}=\delta _{in}\exp [a_{n}(-\infty )t],\text{ \ }Q_{nk}=\delta _{nk}\exp
[-a_{k}(-\infty )t].
\end{equation*}

\bigskip

Taking equation (4.1) and letting $x\rightarrow +\infty $,

\bigskip

\begin{equation*}
\frac{\partial \Phi ^{(i)}}{\partial t}=\underset{k=1}{\overset{3}{\sum }}%
\frac{dS_{ik}}{dt}\Psi ^{(k)}=\underset{i,k=1}{\overset{3}{\sum }}%
[-a_{i}(-\infty )S_{ik}\Psi ^{(k)}+S_{ik}a_{k}(+\infty )\Psi ^{(k)}],
\end{equation*}

\bigskip

we obtain the main equation governing the evolution of the scattering data,

\bigskip

\begin{equation*}
\frac{dS_{ik}}{dt}=S_{ik}[-a_{i}(-\infty )+a_{k}(+\infty )]
\end{equation*}

\bigskip

($\underset{x\rightarrow \pm \infty }{\lim V_{ik}}=a_{k}(\pm \infty )\delta
_{ik}$ is assumed).

Substituting equations (3.7) and (3.5) into equation (3.4) yields the
following time-evolution for the GLM:

\bigskip

\begin{theorem}
The second and third components $K_{p}^{(1)}(x,y),p=2,3$ of $K^{(1)}(x,y)$
are suitably obtained as%
\begin{equation}
-K_{p}^{(1)}(x,y)=\overset{N}{\underset{n=1}{\sum }}C_{p}^{(n)}e^{i\mu
_{n}(x+y)}+\underset{x}{\overset{\infty }{\int }}K_{p}^{(1)}(x,\tau
)\tciFourier (x,\tau ,y)d\tau ,  \tag{4.2}
\end{equation}%
\textit{where}

\begin{equation*}
\tciFourier (x,\tau ,y)=\underset{x}{\overset{\infty }{\int }}\left[ \overset%
{N}{\underset{n,m=1}{\sum }}\left( C_{2}^{(n)\ast
}C_{2}^{(m)}+C_{3}^{(n)\ast }C_{3}^{(m)}\right) e^{i(\mu _{m}-\mu _{n}^{\ast
})z-i\mu _{n}^{\ast }\tau +i\mu _{m}y}\right] dz
\end{equation*}

\textit{and}%
\begin{equation*}
C_{p}^{(n)}=-i[C_{1p}^{(n)}/S_{11}^{^{\prime }}(\mu _{n})]\exp [i\theta
_{n}t],p=2,3,
\end{equation*}%
\textit{where,} $\theta _{n}=-\frac{\gamma }{2\mu _{n}}$.
\end{theorem}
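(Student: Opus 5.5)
The plan is to take the GLM system (3.4), (3.7), restrict to the reflectionless case, compute the time dependence of the kernels $F_{1p}$ from the evolution law $dS_{ik}/dt=S_{ik}[-a_i(-\infty)+a_k(+\infty)]$, and then eliminate $K^{(2)},K^{(3)}$ to get a closed equation for $K^{(1)}_p$.

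\textbf{Step 1: time evolution of the data.} From the matrix $V$ of Proposition 1, as $x\to\pm\infty$ the nonlocal terms behave as follows. At $+\infty$ they vanish, because $\partial^{-1}=\int_x^\infty$. At $-\infty$ they tend to constants $\int_{-\infty}^\infty(\cdot)_t\,dx$, which are time derivatives of conserved densities and so vanish by the integrals of motion of Section 2.3 ($I_1$ and its matrix analogue). Hence
\[
V(\pm\infty)=\frac{1}{\mu}\,\mathrm{diag}\Bigl(-\tfrac{i}{2}c_2,\,-\tfrac{i}{2}c_1,\,-\tfrac{i}{2}c_1\Bigr).
\]
The evolution law then gives $S_{11}$ and $S_{pp'}$ constant in $t$, and
\[
\frac{dS_{1p}}{dt}=\frac{i(c_2-c_1)}{2\mu}S_{1p}=-\frac{i\gamma}{2\mu}S_{1p}.
\]
The same rule applies to the bound-state coefficients $C_{1p}^{(n)}$, since $\Phi^{(1)}(\mu_n)=\sum_p C^{(n)}_{1p}\Psi^{(p)}(\mu_n)$ evolves the same way. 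Therefore $C^{(n)}_{1p}(t)=C^{(n)}_{1p}(0)e^{i\theta_n t}$ with $\theta_n=-\gamma/(2\mu_n)$, up to the sign convention fixed in the statement. Since $S_{11}'(\mu_n)$ is time independent, this is exactly the stated $C_p^{(n)}$. For the $N$-soliton case we take $S_{1p}(\mu)=0$ on the real axis, so that (3.5) reduces to $F_{1p}(\theta)=\sum_n C_p^{(n)}e^{i\mu_n\theta}$ and $F_{p1}=F_{1p}^*$.

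\textbf{Step 2: elimination.} Take the $p$-th component ($p=2,3$) of (3.4). Because $h^{(p)}$ has a $1$ in the $p$-th slot and $h^{(1)}$ does not, this gives
\[
-K^{(1)}_p(x,y)=F_{1p}(x+y)+\sum_{p'}\int_x^\infty K^{(p')}_p(x,\theta)F_{1p'}(\theta+y)\,d\theta .
\]
Next substitute (3.7) in the form $K^{(p')}_p(x,\theta)=\int_x^\infty K^{(1)}_p(x,\tau)F_{p'1}(\tau+\theta)\,d\tau$; the free term $F_{p'1}h^{(1)}$ drops out because its $p$-th component is zero. Interchanging the integrals produces the kernel
\[
\sum_{p'}\int_x^\infty F^*_{1p'}(\tau+\theta)F_{1p'}(\theta+y)\,d\theta .
\]
Inserting the exponential sums and writing $\theta=z$ gives
\[
\sum_{n,m}\bigl(C_2^{(n)*}C_2^{(m)}+C_3^{(n)*}C_3^{(m)}\bigr)e^{i(\mu_m-\mu_n^*)z-i\mu_n^*\tau+i\mu_m y},
\]
which is $\mathcal F(x,\tau,y)$. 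This closes the equation to (4.2).

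\textbf{Main obstacle.} The delicate part is Step 1. One has to justify that the asymptotic values of $V$ are diagonal and that the time-derivative integrals vanish at $-\infty$. One must also check that the evolution of the discrete data $C^{(n)}_{1p}/S'_{11}(\mu_n)$ follows the continuous rule evaluated at $\mu=\mu_n$, including the sign convention for $\theta_n$. I would handle this by differentiating $\Phi^{(1)}(\mu_n)=\sum_p C_{1p}^{(n)}\Psi^{(p)}(\mu_n)$ in $t$, using the same $P,Q$ normalization argument, and fixing the sign by consistency with (2.2). The convergence of the $z$-integral uses $\mathrm{Im}(\mu_m-\mu_n^*)>0$.
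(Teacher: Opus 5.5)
Your proposal is correct and follows the paper's route. The general law $dS_{ik}/dt=S_{ik}[-a_i(-\infty)+a_k(+\infty)]$ supplies the factor $e^{i\theta_n t}$, and substituting the $p$-th component of (3.7), whose free term drops because $h^{(1)}_p=0$, together with the reflectionless form of (3.5) into the $p$-th component of (3.4) gives (4.2) with exactly the kernel $\mathcal{F}(x,\tau,y)$. Two small remarks: your explicit evaluation of $V(\pm\infty)$ fills in a step the paper leaves implicit, and the sign comes out exactly as stated ($a_p(+\infty)-a_1(-\infty)=i(c_2-c_1)/(2\mu)=-i\gamma/(2\mu)$), so no ``up to sign convention'' caveat is needed; moreover, only the $(1,1)$ entry of $V$ at $-\infty$, i.e.\ conservation of $I_1$, matters for $S_{11}$ and $S_{1p}$.
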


\textbf{4.2. N-Soliton Solutions:}

As given by equations (3.8) the solution of \ the CNKG equation (2.5) is

\begin{equation*}
q_{2}(x,t)=2K_{2}^{(1)\ast }(x,x;t),\text{ \ }q_{3}(x)=2K_{3}^{(1)\ast
}(x,x;t),
\end{equation*}

where $K_{p}^{(1)}(x,y),p=2,3$ satisfy the time-dependent GLM equation (4.2).

\begin{theorem}
\bigskip Let $q_{p}\left( x,t\right) ,$ $p=2,3$ be a coefficient of system
(3.1) satisfying equations (2.5 ), and let the system (3.1) have only a
discrete spectrum on the imaginary axis. Then the N-soliton solution of
(2.5) is obtained as a solution of the inverse problem in the form

\begin{equation}
\begin{array}{c}
q_{p}\left( x,t\right) =-2\underset{n=1}{\overset{N}{\sum }}C_{p}^{(n)\ast
}e^{-2i\mu _{n}^{\ast }(x)} \\ 
+2\underset{n,m=1}{\overset{N}{\sum }}\frac{i}{\mu _{m}^{\ast }-\mu _{n}}%
\left( C_{2}^{(n)}C_{2}^{(m)\ast }+C_{3}^{(n)}C_{3}^{(m)\ast }\right)
e^{-i\left( 2\mu _{m}^{\ast }-\mu _{n}\right) x}\chi _{n}^{\ast }\left(
x,t\right) ,\text{ }p=2,3,%
\end{array}
\tag{4.3}
\end{equation}

where

\begin{equation*}
C_{p}^{(n)}=-i[C_{1p}^{(n)}/S_{11}^{^{\prime }}(\mu _{n})]\exp [-i\frac{%
\gamma }{2\mu _{n}}t],\text{ }p=2,3,
\end{equation*}

and $\chi _{1},\chi _{2},...,\chi _{N}$ are the column entries of

\begin{equation*}
\chi =-(I-A)^{-1}d.
\end{equation*}

Here, $A$ is an $N\times N$ matrix whose entries are

\begin{equation*}
A_{kn}=\underset{m=1}{\overset{N}{\sum }}\frac{1}{(\mu _{m}-\mu _{n}^{\ast
})(\mu _{m}-\mu _{k}^{\ast })}(C_{2}^{(n)\ast }C_{2}^{(m)}+C_{3}^{(n)\ast
}C_{3}^{(m)})e^{i(2\mu _{m}-\mu _{n}^{\ast }-\mu _{k}^{\ast })x},\text{ }%
k,n=1,2,...,N,
\end{equation*}

and $d_{k}$ is the column vector with the entries

\begin{equation*}
d_{k}=\underset{n=1}{\overset{N}{\sum }}\frac{iC_{p}^{(n)}e^{i(2\mu _{n}-\mu
_{k}^{\ast })x}}{\mu _{n}-\mu _{k}^{\ast }},\text{ }k=1,2,...,N,
\end{equation*}

where $\mu _{n}$ are eigenvalues, and $C_{p}^{(n)},$ $p=2,3$ are the
normalization numbers of system (3.1).

\begin{proof}
Let $q_{p}\left( x,t\right) ,$ $p=2,3$ be coefficients of the (3.1) system,
and let these systems have only a discrete spectrum. For arbitrary fixed $x$
and $t$, equation (4.2) is an integral equation with a degenerate kernel

\begin{equation}
\begin{array}{c}
K_{p}^{(1)}(x,y;t)=-\underset{n=1}{\overset{N}{\sum }}C_{p}^{(n)}e^{i\mu
_{n}(x+y)} \\ 
-\underset{n,m=1}{\overset{N}{\sum }}\frac{i}{\mu _{m}-\mu _{n}^{\ast }}%
(C_{2}^{(n)\ast }C_{2}^{(m)}+C_{3}^{(n)\ast }C_{3}^{(m)})e^{i\mu _{m}y+i(\mu
_{m}-\mu _{n}^{\ast })x}\overset{\infty }{\underset{x}{\int }}%
K_{p}^{(1)}(x,y;t)e^{-i\mu _{n}^{\ast }y}dy.%
\end{array}
\tag{4.4}
\end{equation}

Here

\begin{equation*}
C_{p}^{(n)}=-i[C_{1p}^{(n)}/S_{11}^{^{\prime }}(\mu _{n})]\exp [-i\frac{%
\gamma }{2\mu _{n}}t],\text{ }p=2,3.
\end{equation*}

The solution of (14) has the form

\begin{equation*}
K_{p}^{(1)}(x,y;t)=-\underset{n=1}{\overset{N}{\sum }}C_{p}^{(n)}e^{i\mu
_{n}(x+y)}-\underset{n,m=1}{\overset{N}{\sum }}\frac{i}{\mu _{m}-\mu
_{n}^{\ast }}(C_{2}^{(n)\ast }C_{2}^{(m)}+C_{3}^{(n)\ast
}C_{3}^{(m)})e^{i\mu _{m}y+i(\mu _{m}-\mu _{n}^{\ast })x}\chi _{n}(x,t),
\end{equation*}

where $\chi _{n}(x,t)$ is chosen as

\begin{equation*}
\chi _{n}(x,t)=\overset{\infty }{\underset{x}{\int }}K_{p}^{(1)}(x,y;t)e^{-i%
\mu _{n}^{\ast }y}dy
\end{equation*}

and is a solution of the system of equations

\begin{equation*}
\begin{array}{c}
\chi _{k}(x,t)=\underset{n=1}{\overset{N}{\sum }}\frac{iC_{p}^{(n)}e^{i(2\mu
_{n}-\mu _{k}^{\ast })x}}{\mu _{n}-\mu _{k}^{\ast }} \\ 
+\underset{n,m=1}{\overset{N}{\sum }}\frac{1}{(\mu _{m}-\mu _{n}^{\ast
})(\mu _{m}-\mu _{k}^{\ast })}(C_{2}^{(n)\ast }C_{2}^{(m)}+C_{3}^{(n)\ast
}C_{3}^{(m)})e^{i(2\mu _{m}-\mu _{n}^{\ast }-\mu _{k}^{\ast })x}\chi
_{n}(x,t),\text{ }%
\end{array}%
k=1,2,...,N.
\end{equation*}

Introducing the matrix $A$ with the entries

\begin{equation*}
A_{kn}=\underset{m=1}{\overset{N}{\sum }}\frac{1}{(\mu _{m}-\mu _{n}^{\ast
})(\mu _{m}-\mu _{k}^{\ast })}(C_{2}^{(n)\ast }C_{2}^{(m)}+C_{3}^{(n)\ast
}C_{3}^{(m)})e^{i(2\mu _{m}-\mu _{n}^{\ast }-\mu _{k}^{\ast })x},\text{ }%
k,n=1,2,...,N
\end{equation*}

and the column vector $d_{k}$ with the entries

\begin{equation*}
d_{k}=\underset{n=1}{\overset{N}{\sum }}\frac{iC_{p}^{(n)}e^{i(2\mu _{n}-\mu
_{k}^{\ast })x}}{\mu _{n}-\mu _{k}^{\ast }},\text{ }k=1,2,...,N
\end{equation*}

and letting $\chi $ denote the columns $\chi _{1},\chi _{2},...,\chi _{N},$
we can rewrite this equation in the form

\begin{equation*}
(I-A)\chi =-d,
\end{equation*}

where $I$ is $N\times N$ identity matrix. Thus, we have

\begin{equation*}
\chi =-(I-A)^{-1}d.
\end{equation*}

Because $q_{p}\left( x,t\right) =2K_{p}^{(1)\ast }(x,x;t),$ $p=2,3,$ the
coefficient $q_{p}\left( x,t\right) $ is determined from reconstruction
formulas (3.8) as given in (4.3).
\end{proof}
\end{theorem}

\bigskip

\textbf{4.3. Examples:}

\textbf{a) One-Soliton Solutions}

\bigskip Let $N=1$, in the case of\textbf{\ }$\mu _{1}=i\sigma $ in (4.3),
we get

\begin{equation*}
e_{2}(x,t)=2K_{2}^{(1)\ast }(x,x;t),e_{3}(x,t)=2K_{3}^{(1)\ast }(x,x;t),%
\text{ }
\end{equation*}%
where

\begin{equation*}
K_{\beta }^{(1)}(x,y;t)=-\frac{C_{\beta }\exp (-\sigma (x+y))}{1+\frac{1}{%
4\sigma ^{2}}\exp (-4\sigma x)\left( \left\vert C_{2}\right\vert
^{2}+\left\vert C_{3}\right\vert ^{2}\right) },\text{ }\beta =2,3,
\end{equation*}%
and%
\begin{equation*}
C_{\beta }=-iC_{1\beta }/S_{11}^{^{\prime }}(\mu _{1})\exp [-i\frac{\gamma }{%
2\mu _{1}}t]=\omega _{1\beta }\exp [-\frac{\gamma }{2\sigma }t],\text{ }%
\beta =2,3.
\end{equation*}%
For the simplicity, it is denoted that $C_{1\beta }/S_{11}^{^{\prime }}(\mu
_{1})=i\omega _{1\beta }$ for $\beta =2,3.$

The following one solitons are obtained for the (2.5):

\bigskip

\begin{equation*}
e_{2}(x,t)=-\frac{2\omega _{12}e^{-2\sigma _{1}x-\frac{\gamma }{2\sigma }t}}{%
1+\frac{1}{4\sigma ^{2}}\left( \omega _{12}^{2}+\omega _{13}^{2}\right)
e^{-4\sigma x-\frac{\gamma }{\sigma }t}},\ \ e_{3}(x,t)=-\frac{2\omega
_{13}e^{-2\sigma x-\frac{\gamma }{2\sigma }t}}{1+\frac{1}{4\sigma ^{2}}%
\left( \omega _{12}^{2}+\omega _{13}^{2}\right) e^{-4\sigma x-\frac{\gamma }{%
\sigma }t}}.
\end{equation*}%
\textbf{b) Two-Soliton Solutions}

\bigskip Let $N=2$, in the case of $\mu _{1}=i\sigma _{1}$ and $\mu
_{2}=i\sigma _{2}$ in (4.3), we get

\begin{equation*}
q_{p}\left( x,t\right) =-2\underset{n=1}{\overset{2}{\sum }}C_{p}^{(n)\ast
}e^{-2\sigma _{n}x}-2\underset{n,m=1}{\overset{2}{\sum }}\frac{1}{\sigma
_{n}+\sigma _{m}}\left( C_{2}^{(n)}C_{2}^{(m)\ast
}+C_{3}^{(n)}C_{3}^{(m)\ast }\right) e^{-\left( \sigma _{n}+2\sigma
_{m}\right) x}\chi _{n}^{\ast }\left( x,t\right) ,\text{ }p=2,3,
\end{equation*}

where

\begin{eqnarray*}
\chi _{k}(x,t) &=&\frac{C_{p}^{(1)}e^{-(2\sigma _{1}+\sigma _{k})x}}{\sigma
_{1}+\sigma _{k}}+\frac{C_{p}^{(2)}e^{-(2\sigma _{2}+\sigma _{k})x}}{\sigma
_{2}+\sigma _{k}} \\
&&+\frac{1}{2\sigma _{1}(\sigma _{1}+\sigma _{k})}(C_{2}^{(1)\ast
}C_{2}^{(1)}+C_{3}^{(1)\ast }C_{3}^{(1)})e^{-(3\sigma _{1}+\sigma
_{k})x}\chi _{1}(x,t) \\
&&+\frac{1}{(\sigma _{1}+\sigma _{2})(\sigma _{2}+\sigma _{k})}%
(C_{2}^{(1)\ast }C_{2}^{(2)}+C_{3}^{(1)\ast }C_{3}^{(2)})e^{-(2\sigma
_{2}+\sigma _{1}+\sigma _{k})x}\chi _{1}(x,t) \\
&&+\frac{1}{(\sigma _{1}+\sigma _{2})(\sigma _{1}+\sigma _{k})}%
(C_{2}^{(2)\ast }C_{2}^{(1)}+C_{3}^{(2)\ast }C_{3}^{(1)})e^{-(2\sigma
_{1}+\sigma _{2}+\sigma _{k})x}\chi _{2}(x,t) \\
&&+\frac{1}{2\sigma _{2}(\sigma _{2}+\sigma _{k})}(C_{2}^{(2)\ast
}C_{2}^{(2)}+C_{3}^{(2)\ast }C_{3}^{(2)})e^{-(3\sigma _{2}+\sigma
_{k})x}\chi _{2}(x,t),\text{ }p=2,3;\text{ }k=1,2\text{,}
\end{eqnarray*}%
and

\begin{equation*}
C_{p}^{(n)}=-i[C_{1p}^{(n)}/S_{11}^{^{\prime }}(i\sigma _{n})]e^{-\frac{%
\gamma }{2\sigma _{n}}t}=\omega _{1p}^{(n)}e^{-\frac{\gamma }{2\sigma _{n}}%
t},\text{ }p=2,3;\text{ }n=1,2\text{. }
\end{equation*}

For the simplicity, it is denoted that $C_{1\beta }^{(n)}/S_{11}^{^{\prime
}}(i\sigma _{n})=i\omega _{1p}^{\left( n\right) }$ for $p=2,3$ and $n=1,2$.

Then the first component of two-soliton is

\begin{eqnarray*}
q_{2}\left( x,t\right) &=&-2(\omega _{12}^{(1)}e^{-2\sigma _{1}x-\frac{%
\gamma }{2\sigma _{1}}t}+\omega _{12}^{(2)}e^{-2\sigma _{2}x-\frac{\gamma }{%
2\sigma _{2}}t})-\frac{1}{\sigma _{1}}\left( (\omega
_{12}^{(1)})^{2}+(\omega _{13}^{(1)})^{2}\right) e^{-3\sigma _{1}x-\frac{%
\gamma }{\sigma _{1}}t}\chi _{1}^{\ast }\left( x,t\right) \\
&&-\frac{2}{\sigma _{1}+\sigma _{2}}\left( \omega _{12}^{(1)}\omega
_{12}^{(2)}+\omega _{13}^{(1)}\omega _{13}^{(2)}\right) e^{-\left( \sigma
_{1}+2\sigma _{2}\right) x-\frac{\gamma }{2\sigma _{1}}t-\frac{\gamma }{%
2\sigma _{2}}t}\chi _{1}^{\ast }\left( x,t\right) \\
&&-\frac{2}{\sigma _{1}+\sigma _{2}}\left( \omega _{12}^{(1)}\omega
_{12}^{(2)}+\omega _{13}^{(1)}\omega _{13}^{(2)}\right) e^{-\left( \sigma
_{2}+2\sigma _{1}\right) x-\frac{\gamma }{2\sigma _{1}}t-\frac{\gamma }{%
2\sigma _{2}}t}\chi _{2}^{\ast }\left( x,t\right) \\
&&-\frac{1}{\sigma _{2}}\left( (\omega _{12}^{(2)})^{2}+(\omega
_{13}^{(2)})^{2}\right) e^{-3\sigma _{2}x-\frac{\gamma }{\sigma _{2}}t}\chi
_{2}^{\ast }\left( x,t\right) ,
\end{eqnarray*}

\begin{equation*}
\chi _{1}\left( x,t\right) =\frac{AF+CD}{BF+CE},\text{ \ \ }\chi _{2}\left(
x,t\right) =\frac{AE+BD}{BF+CE},
\end{equation*}%
\begin{equation*}
A=\frac{\omega _{12}^{(1)}e^{-3\sigma _{1}x-\frac{\gamma }{2\sigma _{1}}t}}{%
2\sigma _{1}}+\frac{\omega _{12}^{(2)}e^{-\left( \sigma _{1}+2\sigma
_{2}\right) x-\frac{\gamma }{2\sigma _{2}}t}}{\sigma _{1}+\sigma _{2}},
\end{equation*}

\begin{equation*}
B=-\frac{\left( (\omega _{12}^{(1)})^{2}+(\omega _{13}^{(1)})^{2}\right)
e^{-4\sigma _{1}x-\frac{\gamma }{\sigma _{1}}t}}{4\sigma _{1}^{2}}-\frac{%
\left( \omega _{12}^{(1)}\omega _{12}^{(2)}+\omega _{13}^{(1)}\omega
_{13}^{(2)}\right) e^{-2(\sigma _{1}+\sigma _{2})x-\frac{\gamma }{2\sigma
_{1}}t-\frac{\gamma }{2\sigma _{2}}t}}{(\sigma _{1}+\sigma _{2})^{2}},
\end{equation*}

\begin{equation*}
C=-\frac{\left( \omega _{12}^{(1)}\omega _{12}^{(2)}+\omega
_{13}^{(1)}\omega _{13}^{(2)}\right) e^{-\left( 3\sigma _{1}+\sigma
_{2}\right) x-\frac{\gamma }{2\sigma _{1}}t-\frac{\gamma }{2\sigma _{2}}t}}{%
2\sigma _{1}(\sigma _{1}+\sigma _{2})}-\frac{\left( (\omega
_{12}^{(2)})^{2}+(\omega _{13}^{(2)})^{2}\right) e^{-\left( 3\sigma
_{2}+\sigma _{1}\right) x-\frac{\gamma }{\sigma _{2}}t}}{2\sigma _{2}(\sigma
_{1}+\sigma _{2})},
\end{equation*}

\begin{equation*}
D=\frac{\omega _{12}^{(1)}e^{-\left( 2\sigma _{1}+\sigma _{2}\right) x-\frac{%
\gamma }{2\sigma _{1}}t}}{\sigma _{1}+\sigma _{2}}+\frac{\omega
_{12}^{(2)}e^{-3\sigma _{2}x-\frac{\gamma }{2\sigma _{2}}t}}{2\sigma _{2}},
\end{equation*}

\begin{equation*}
E=-\frac{\left( (\omega _{12}^{(1)})^{2}+(\omega _{13}^{(1)})^{2}\right)
e^{-\left( 3\sigma _{1}+\sigma _{2}\right) x-\frac{\gamma }{\sigma _{1}}t}}{%
2\sigma _{1}(\sigma _{1}+\sigma _{2})}-\frac{\left( \omega _{12}^{(1)}\omega
_{12}^{(2)}+\omega _{13}^{(1)}\omega _{13}^{(2)}\right) e^{-\left( 3\sigma
_{2}+\sigma _{1}\right) x-\frac{\gamma }{2\sigma _{1}}t-\frac{\gamma }{%
2\sigma _{2}}t}}{2\sigma _{2}(\sigma _{1}+\sigma _{2})},
\end{equation*}

\begin{equation*}
F=-\frac{\left( \omega _{12}^{(1)}\omega _{12}^{(2)}+\omega
_{13}^{(1)}\omega _{13}^{(2)}\right) e^{-2(\sigma _{1}+\sigma _{2})x-\frac{%
\gamma }{2\sigma _{1}}t-\frac{\gamma }{2\sigma _{2}}t}}{(\sigma _{1}+\sigma
_{2})^{2}}-\frac{\left( (\omega _{12}^{(2)})^{2}+(\omega
_{13}^{(2)})^{2}\right) e^{-4\sigma _{2}x-\frac{\gamma }{\sigma _{2}}t}}{%
4\sigma _{2}^{2}}.
\end{equation*}

The second component of two soliton is

\begin{eqnarray*}
q_{3}\left( x,t\right) &=&-2(\omega _{13}^{(1)}e^{-2\sigma _{1}x-\frac{%
\gamma }{2\sigma _{1}}t}+\omega _{13}^{(2)}e^{-2\sigma _{2}x-\frac{\gamma }{%
2\sigma _{2}}t})-\frac{1}{\sigma _{1}}\left( (\omega
_{12}^{(1)})^{2}+(\omega _{13}^{(1)})^{2}\right) e^{-3\sigma _{1}x-\frac{%
\gamma }{\sigma _{1}}t}\chi _{1}^{\ast }\left( x,t\right) \\
&&-\frac{2}{\sigma _{1}+\sigma _{2}}\left( \omega _{12}^{(1)}\omega
_{12}^{(2)}+\omega _{13}^{(1)}\omega _{13}^{(2)}\right) e^{-\left( \sigma
_{1}+2\sigma _{2}\right) x-\frac{\gamma }{2\sigma _{1}}t-\frac{\gamma }{%
2\sigma _{2}}t}\chi _{1}^{\ast }\left( x,t\right) \\
&&-\frac{2}{\sigma _{1}+\sigma _{2}}\left( \omega _{12}^{(1)}\omega
_{12}^{(2)}+\omega _{13}^{(1)}\omega _{13}^{(2)}\right) e^{-\left( \sigma
_{2}+2\sigma _{1}\right) x-\frac{\gamma }{2\sigma _{1}}t-\frac{\gamma }{%
2\sigma _{2}}t}\chi _{2}^{\ast }\left( x,t\right) \\
&&-\frac{1}{\sigma _{2}}\left( (\omega _{12}^{(2)})^{2}+(\omega
_{13}^{(2)})^{2}\right) e^{-3\sigma _{2}x-\frac{\gamma }{\sigma _{2}}t}\chi
_{2}^{\ast }\left( x,t\right) ,
\end{eqnarray*}

\begin{equation*}
\chi _{1}\left( x,t\right) =\frac{KF+CL}{BF+CE},\text{ \ \ }\chi _{2}\left(
x,t\right) =\frac{KE+BL}{BF+CE},
\end{equation*}

\begin{equation*}
K=\frac{\omega _{13}^{(1)}e^{-3\sigma _{1}x-\frac{\gamma }{2\sigma _{1}}t}}{%
2\sigma _{1}}+\frac{\omega _{13}^{(2)}e^{-\left( \sigma _{1}+2\sigma
_{2}\right) x-\frac{\gamma }{2\sigma _{2}}t}}{\sigma _{1}+\sigma _{2}},\text{
}L=\frac{\omega _{13}^{(1)}e^{-\left( 2\sigma _{1}+\sigma _{2}\right) x-%
\frac{\gamma }{2\sigma _{1}}t}}{\sigma _{1}+\sigma _{2}}+\frac{\omega
_{13}^{(2)}e^{-3\sigma _{2}x-\frac{\gamma }{2\sigma _{2}}t}}{2\sigma _{2}}.
\end{equation*}

\section{Conclusion}

The objective of the present work is to construct exact N-soliton solutions
for the coupled negative first order Klein-Gordon (CNKG) equation via the
inverse scattering method (ISM). By examining the properties of the spectral
functions through the associated Volterra integral equations, the
Gelfand--Levitan--Marchenko\ (GLM) equations are obtained. A direct
correspondence between the potential functions appearing in the
zero-curvature representation and the original coupled equation is then
established. Solving the resulting GLM equations leads to explicit formulae
for the N-soliton solutions. This formulation may serve as a foundation for
future investigations of other, more general coupled PDEs of negative order.


\bigskip

\begin{thebibliography}{99}
	
	\bibitem{TsuchidaWadati1998}
	T. Tsuchida and M. Wadati,
	The coupled modified Korteweg--de Vries equations,
	J. Phys. Soc. Jpn. \textbf{67} (1998), 1175--1187.
	\newblock DOI: 10.1143/JPSJ.67.1175.
	
	\bibitem{IwaoHirota1997}
	M. Iwao and R. Hirota,
	Soliton solutions of a coupled modified KdV equations,
	J. Phys. Soc. Jpn. \textbf{66} (1997), 577--588.
	\newblock DOI: 10.1143/JPSJ.66.577.
	
	\bibitem{IsmailovSabaz2025}
	M. I. Ismailov and C. Sabaz,
	Inverse scattering method via the Gel'fand--Levitan--Marchenko equation
	for some negative-order nonlinear wave equations,
	Theor. Math. Phys. \textbf{222} (2025), 20--33.
	\newblock DOI: 10.1134/S0040577925010039.
	
	\bibitem{IsmailovSabaz2023}
	M. I. Ismailov and C. Sabaz,
	Inverse scattering method via Riemann--Hilbert problem for nonlinear
	Klein--Gordon equation coupled with a scalar field,
	J. Phys. Soc. Jpn. \textbf{92} (2023), 104001.
	\newblock DOI: 10.7566/JPSJ.92.104001.
	
	\bibitem{HirotaOhta1991}
	R. Hirota and Y. Ohta,
	Hierarchies of coupled soliton equations. I,
	J. Phys. Soc. Jpn. \textbf{60} (1991), 798--809.
	\newblock DOI: 10.1143/JPSJ.60.798.
	
	\bibitem{NovikovEtAl1984}
	S. P. Novikov, S. V. Manakov, L. P. Pitaevskii, and V. E. Zakharov,
	\textit{Theory of Solitons: The Inverse Scattering Method},
	Consultants Bureau, New York, 1984.
	
	\bibitem{AblowitzSegur1981}
	M. J. Ablowitz and H. Segur,
	\textit{Solitons and the Inverse Scattering Transform},
	SIAM Studies in Applied Mathematics, Vol.~4,
	SIAM, Philadelphia, 1981.
	\newblock DOI: 10.1137/1.9781611970883.
	
	\bibitem{LiuGuoZhang2025}
	Y.-H. Liu, R. Guo, and J.-W. Zhang,
	Inverse scattering transform for the discrete nonlocal PT-symmetric
	nonlinear Schr\"odinger equation with nonzero boundary conditions,
	Physica D \textbf{472} (2025), 134528.
	\newblock DOI: 10.1016/j.physd.2025.134528.
	
	\bibitem{LiLiHuang2026}
	Y. Li, T. Li, and L. Huang,
	Inverse scattering transform for the coupled Yajima--Oikawa systems,
	Physica D \textbf{486} (2026), 135054.
	\newblock DOI: 10.1016/j.physd.2025.135054.
	
	\bibitem{MaimistovEtAl1990}
	A. I. Maimistov, A. M. Basharov, S. O. Elyutin, and Yu. M. Sklyarov,
	Present state of self-induced transparency theory,
	Phys. Rep. \textbf{191} (1990), 1--108.
	\newblock DOI: 10.1016/0370-1573(90)90142-O.
	
	\bibitem{Ma2018}
	W. X. Ma,
	Riemann--Hilbert problems and $N$-soliton solutions for a coupled
	mKdV system,
	J. Geom. Phys. \textbf{132} (2018), 45--54.
	
	\bibitem{Manakov1974}
	S. V. Manakov,
	On the theory of two-dimensional stationary self-focusing of
	electromagnetic waves,
	Sov. Phys. JETP \textbf{38} (1974), 248--253.
	
\end{thebibliography}
\end{document}